\title{\LARGE \bf
Pursuit-evasion differential games of players with different speeds\\ in spaces of different dimensions}
\author{Shuai~Li, Chen~Wang and~Guangming~Xie% <-this % stops a space
\thanks{This work was supported in part by grants from the National Natural Science Foundation of China (NSFC, No.61973007, 61633002). \emph{Corresponding author: C. Wang}}% <-this % stops a space
\thanks{S. Li, C. Wang  and G. Xie are with the State Key Laboratory of Turbulence and Complex Systems, Intelligent Biomimetic Design Lab,
       College of Engineering, Peking University, Beijing 100871, China.
       {\tt\small \{shuaier, wangchen, xiegming\}@pku.edu.cn}}%
\thanks{C. Wang is also with the National Engineering Research Center of Software Engineering, Peking University, Beijing 100871, China.}%
}
\crefname{figure}{Fig.}{Fig.}
\newtheorem{theorem}{Theorem}
\newtheorem{lemma}{Lemma}
\newtheorem{remark}{Remark}
\newtheorem{problem}{Problem}
\newtheorem{assumption}{Assumption}
\begin{document}

\maketitle
\thispagestyle{empty}
\pagestyle{empty}

%%%%%%%%%%%%%%%%%%%%%%%%%%%%%%%%%%%%%%%%%%%%%%%%%%%%%%%%%%%%%%%%%%%%%%%%%%%%%%%%
\begin{abstract}
We study pursuit-evasion differential games between a faster pursuer moving in 3D space and an evader moving in a plane. We first extend the well-known Apollonius circle to 3D space, by which we construct the isochron for the considered two players. Then both cases with and without a static target are considered and the corresponding optimal strategies are derived using the concept of isochron. In order to guarantee the optimality of the proposed strategies, the value functions are given and are further proved to be the solution of Hamilton-Jacobi-Isaacs equation. Simulations with comparison between the proposed strategies and other classical strategies are carried out and the results show the optimality of the proposed strategies. 
\end{abstract}

%%%%%%%%%%%%%%%%%%%%%%%%%%%%%%%%%%%%%%%%%%%%%%%%%%%%%%%%%%%%%%%%%%%%%%%%%%%%%%%%
\section{INTRODUCTION}

Pursuit-evasion (PE) games have attracted a lot of interest in recent years, showing the wide applicability in modeling many adversarial problems, such as predator-prey problems in biology \cite{Peterson2021,Angelani2012}, and aerial combat in military \cite{Tang2021,Vlahov2018}. In classical one-pursuer-one-evader games \cite{Isaacs1965}, the pursuer tries to capture the evader as soon as possible while the evader thinks the contrary. Thus it can be modeled by a zero-sum game, which will be more complex when there exists other constraints. Originated from the seminal work by Isaacs \cite{Isaacs1965}, there are many kinds of variations of the problem, such as target-attacker-defender (TAD) problem \cite{Garcia2017,Zhang2021}, perimeter defending problem \cite{Shishika2018, Lee2020}, and multi-player problem \cite{Wang2020,Ramana2017}. 

Coming with various PE-like problems, many methods are developed to design strategies for players.
Taking the geometric properties of many PE or TAD problems, a plethora of methods based on isochron are proposed \cite{Weintraub2020,Yan2019,Zhou2016,Garcia2021,Yan2021}. Isochron is the set of points where a player can reach at the same time. And the intersection of isochrones of the pursuers and the evaders are a set of points they can arrive simultaneously. For players with identical speed, the intersection of isochrones is a perpendicular bisector between the pursuer and the evader \cite{Weintraub2020}, even in 3D space \cite{Yan2019}. When there are more than one pursuers, the intersection leading to Voronoi partitioning, which is used to design an effect capture strategy \cite{Zhou2016}. For players with different speeds, the intersection of isochrones of them usually forms a circle, called Apollonius circle \cite{Garcia2021,Yan2021}. With specific intersection of isochrones, different pursuit and evasion strategies can be obtained for different PE-like problems. However, these geometric approaches cannot guarantee the optimality of the strategy.

Differential games method by \cite{Isaacs1965} is born to solve this kind of game with continuous states. However, the difficulty in solving Hamilton-Jacobi-Isaacs (HJI) equation, especially when the dimension is high, limits the application of this method. Via modeling in differential games and calculating numerically, reach-avoid method is developed for TAD-like problems \cite{Huang2011,Chen2017,Landry2018}, where the player has to come to a region while avoiding another region. Besides, reinforcement learning (RL) is also used to solve these games or HJI equation in many works \cite{DeSouza2021,Wang2020}. However, these numerical methods have obvious weakness in computation time and solution accuracy.

Despite of considerable efforts in varying problem types and designing kinds of methods, most of existing works about PE problems, including most of the above works and others in the literature \cite{Manoharan2021,Zheng2021,Chen2016a,Ruiz2016,Yan2017}, assume that all the players move in a 2D plane.
% , like most of works above and \cite{Manoharan2021,Zheng2021,Chen2016a,Ruiz2016,Yan2017}. 
This assumption of dynamics, although simplifying the analysis and solution of the problem, limits its application in real world. Many pursuit-evasion games take place in 3D space, such as intercepting a missile, chasing a flying bird. There is a little literature about PE problems in 3D space \cite{Yan2019, Garcia2020}, but they only consider the case that the players have the same constant speed. 

In this paper, we consider the PE problem with a faster pursuer moving in 3D space and a slower evader moving in a 2D plane. The crosses of different velocities and different motion spaces make the problem more complicated, and more practical as well. In nature, there are many flying predators who capture the terrestrial preys, like eagles capturing hares and rats. In military, unmanned aerial vehicles (UAVs) and missiles are often used to destroy ground targets like tanks. Our proposed model is suitable to formulate these real-world scenarios. Depending on whether the evader has a target to reach, we divide the problem into two cases, one is modeled as the classical PE game and another is modeled as TAD game. The latter can be used to study the problem when the hare has a hole, or the tank has a safe region.

% 下面这段要好好改一下。 叙述方式都太弱了。没有明确体现本文贡献。用 we propose, we design, we give , 之类的。有理论分析啥的也可以明确说。
In order to give an accurate solution, we make use of the concept of isochron and construct the intersection of isochrones for the pursuer moving in 3D space and the evader moving in a 2D plane, which is different from the classical Apollonius circle \cite{Garcia2021,Yan2021} in 2D plane. %Actually, when the two players both move in 3D space, the intersection of isochrones is a sphere (we called it ``Apollonius sphere''); and the part used in this paper (we also called it as Apollonius circle) is the intersection of the plane and the sphere.
Then for PE problem without target, we find the farthest point on the Apollonius circle for the evader. Using this point we propose the optimal strategies of the players and the corresponding value of the game. For TAD problem when the target is available, counter intuitively, the optimal strategy for the evader is not to move to the target directly. Instead it should move to the closest point to the target on the circle. We further derive the optimal strategies for these two players and give the value of the game. Moreover, using differential games method, we show the verifications of the value functions and the optimal strategies of the two games, which are important to guarantee the optimality of the proposed strategies. Finally, the optimality of the proposed optimal strategies are illustrated by simulations.

The remainder of the paper is organized as follows. In \Cref{sec_formu}, we formulate the PE problem and give some useful preliminary results. 
Then, in \Cref{sec_TAD} and \Cref{sec_PE}, we address the two cases of the PE problem without and with a target, respectively; for each case, both the optimal strategy and value function are proposed and the corresponding verification is provided.
% \Cref{sec_PE} gives the main results of the PE problem when the target is unavailable.
% \Cref{sec_TAD} deals with the case when the target is available. Both the optimal strategies, value functions and the corresponding verification are given in these two sections. 
%
In \Cref{sec_simu}, we show up some simulations and the comparison with other strategies. Finally, \Cref{sec_conc} concludes the paper and raises future works.
%%%%%%%%%%%%%%%%%%%%%%%%%%%%%%%%%%%%%%%%%%%%%%%%%%%%%%%%%%%%%%%

%%%%%%%%%%%%%%%%%
\section{Game of players with different dimensions} \label{sec_formu}

\subsection{Problem Formulation}
Consider a pursuer $P$ moving in 3D space and an evader $E$ moving in a plane. The pursuer aims to capture the evader while the evader aims to escape from the pursuer. % Unlike the usual 2D problem that is often discussed in the literature, the problem considered here is more complicated and more practical. It can be used to study predator-prey problem between some 3D predators like eagles and some planar preys like rabbits. Or it can be used to model adversarial UAV and UGV in some battles where the UAV should to catch the UGV.
Without loss of generality, we establish Cartesian coordinates with $x,y$ axes in the evader's plane. Then, two players' positions are expressed as $P=(x_P,y_P,z_P)^\top\in \mathbb{R}^3$ and $E=(x_E,y_E)^\top\in \mathbb{R}^2$. Note that, in the following, we use $P$ and $E$ to represent the two players and their positions if there is no ambiguity. The whole state is $\bm{x}=(x_P,y_P,z_P,x_E,y_E)^\top \in \mathbb{R}^5$. We assume that the players can move freely in their own space, that is, their dynamics are given by
\begin{align}
	\begin{matrix}
		\dot{x}_P = u_x, & \dot{y}_P = u_y, & \dot{z}_P = u_z, \\
		\dot{x}_E = v_x, & \dot{y}_E = v_y, & 
	\end{matrix}
	\label{eq_dynamics}
\end{align}
where  $\bm{u}=(u_x,u_y,u_z)^\top,\bm{v}=(v_x,v_y)^\top$ are their control inputs, respectively. In this paper, we consider the case where the pursuer can move faster than the evader, which is along with most cases in nature. We make the following assumptions.
%\begin{assumption} \label{assum_1}
    %The pursuer and the evader moves at non-zero constant speeds $u = \sqrt{u_x^2+u_y^2+u_z^2}$ and $v = \sqrt{v_x^2+v_y^2}$, respectively, which satisfy
    %\begin{align}
	%    &\frac{u}{v} \triangleq \delta>1.
	%    \label{eq_speed_ratio}
    %\end{align}
%\end{assumption}
\begin{assumption} \label{assum_1}
    The speeds of both the pursuer and the evader are bounded, that is, there exist $u,v \in (0,+\infty)$ such that
    \begin{align}
        \sqrt{u_x^2+u_y^2+u_z^2}\leq u, ~~
        \sqrt{v_x^2+v_y^2}\leq v.
        \label{eq_speed_bound}
    \end{align}
\end{assumption}
\begin{assumption} \label{assum_2}
    The maximum speed of the pursuer is bigger than that of the evader, that is
    \begin{align}
        &\frac{u}{v} \triangleq \delta>1.
	    \label{eq_speed_ratio}
    \end{align}
\end{assumption}
~\par
%where  are speeds of the two players and are constant of all time.

Now we formulate the two pursuer-evasion differential games between players with different velocities and different motion spaces of interests.
% 问题描述放这里的话，就参考 2019 CDC zhong yisheng那篇的problem 1 定义即可。 如果放 section II.C 中的话，就要数学的严格定义。你选，都可

\begin{problem}[PE game]\label{prob_1} 
    Consider two players, $P$ and $E$, modeled by (\ref{eq_dynamics}), where a pursuer moving in 3D space and an evader moving in a 2D plane. Under \Cref{assum_1} and \ref{assum_2}, find the optimal strategies for these two players to achieve (for $P$) or delay (for $E$) the capture.
\end{problem}

Besides, we also consider that there is a static target $T$.  When the evader moves to the target before being captured by the pursuer, the evader win the game. Otherwise, if the evader is captured by the pursuer before reaching the target, the pursuer win the game.

\begin{problem}[TAD game]\label{prob_2} 
    Consider two players, $P$ and $E$, modeled by (\ref{eq_dynamics}), where a pursuer moving in 3D space and an evader moving in a 2D plane, and a static target for the evader to approach. The objective of $E$ is to get close to the target without being captured, while the objective of $P$ is to capture $E$ at the farthest distance to the target as possible. Under \Cref{assum_1} and \ref{assum_2}, determine which player will win the game and find the corresponding optimal strategies.
\end{problem}

\subsection{Apollonius circle}
In the study of PE problem, isochron is an effective method to acquire the result of the game. In this paper, we derive the intersection of isochrones for 3D PE problem of players with different velocities. Unlike the works in \cite{Yan2019,Garcia2020} where the intersection of isochrones is a plane for players with the same velocities,  the intersection of isochrones for players in our problem is an extended form of Apollonius circle \cite{Isaacs1965}.
% 这是之前在radius后面的，挪到这里来。改改就行
The area inside the circle is the dominant region of the evader since it can move to this region before the pursuer. While the area outside the circle is the dominant region of the pursuer.
%and the pursuer can move to the points outside the circle before the evader. ......

Suppose $P$ and $E$ can simultaneously reach $(x,y,0)$ in the plane. %Taking that $P$ and $E$ can arrive the point simultaneously and the speed ratio is $\delta$, we have
If both of the players move at their maximum speed, then the speed ratio is $\delta$ and we have
\begin{align}
	&\frac{\sqrt{(x-x_P)^2+(y-y_P)^2+z_P^2}}{\sqrt{(x-x_E)^2+(y-y_E)^2}}=\delta, \\
	&\Big(x-\frac{\delta^2x_E-x_P}{\delta^2-1}\Big)^2 + \Big(y-\frac{\delta^2 y_E-y_P}{\delta^2-1}\Big)^2 \nonumber\\
	&~~~~= \frac{z_P^2}{\delta^2-1}+\frac{\delta^2[(x_P-x_E)^2+(y_P-y_E)^2]}{(\delta^2-1)^2}.
	\label{eq_A_circle}
\end{align}
Thus, the intersection of isochrones is a circle in the plane centered at
\begin{align}
	\bm{c} = \Big(\frac{\delta^2x_E-x_P}{\delta^2-1},\frac{\delta^2 y_E-y_P}{\delta^2-1},0\Big)^\top,
	\label{eq_center}
\end{align}
with radius
\begin{align}
	r = \sqrt{\frac{z_P^2}{\delta^2-1}+\frac{\delta^2[(x_P-x_E)^2+(y_P-y_E)^2]}{(\delta^2-1)^2}}.
	\label{eq_radius}
\end{align}
\begin{remark}
    When we suppose the evader can move in 3D space, the capture point can be in 3D space and similarly we will get a spherical isochron. We call this ``Apollonius sphere" (see Fig. \ref{fig_PE}). The circle \eqref{eq_A_circle} used in this paper is actually the intersection of the ``Apollonius sphere" and the plane, and we call it Apollonius circle as well.
\end{remark}

Noth that the intersection, Apollonius circle, is deternimined by the instantaneous positions of $P$ and $E$. So it is instantaneously calculated when implementing the optimal strategies in Section \ref{sec_PE} and \ref{sec_TAD}. Besides, the intersection of isochrones is obtained by assuming both players move at maximum speeds, which is in accordance with the optimal strategies (see Lemma \ref{lemma1} and \ref{lemma2}).

% Fig. \ref{fig_PE}

\subsection{Differential game}
Differential game is often used to describe the adversarial system. For the two pursuer-evasion games, the objectives of $P$ and $E$ are opposite, i.e. this is a zero-sum game. Define the cost function of this game as
\begin{align}
	J = \Phi(\bm{x}(t_f),\bm{u}(t_f),\bm{v}(t_f),t_f) + \int_0^{t_f} h(\bm{x},\bm{u},\bm{v},t) ~dt,
\end{align}
where $t_f$ is the terminal time, $\Phi(\cdot)$ is the terminal cost, and $h(\cdot)$ is the running cost.
The pursuer and the evader aim to find the optimal strategies to minimize or maximize the cost in the game, i.e.,
\begin{align}
	\min_{\bm{u}}\max_{\bm{v}} J(\bm{x},\bm{u},\bm{v},t).
\end{align}
\iffalse
To solve the differential game, the Hamilton function is given by
\begin{align}
	H = \bm{\lambda} \bm{u}+\bm{\mu}\bm{v}+h.
\end{align}
Then the optimal control satisfies
\begin{align}
	\bm{u}^*,\bm{v}^* = \arg\max_{\bm{u}}\arg\min_{\bm{v}} J(\bm{x},\bm{u},\bm{v},t).
\end{align}
\fi
%
The optimal cost is called the \textit{value} of the game and can be expressed as a function $V(\bm{x},t)=	\min_{\bm{u}}\max_{\bm{v}} J$. According to \cite{Isaacs1965}, $V$ should satisfy HJI equation
\begin{align}
	-\frac{\partial V}{\partial t} = \frac{\partial V}{\partial \bm{x}} (\bm{u}^*,\bm{v}^*)^\top + h(\bm{x},\bm{u}^*,\bm{v}^*,t),
	\label{eq_HJI}
\end{align}
where $\bm{u}^*,\bm{v}^*$ are the optimal strategies of the two players.

\section{PE game}\label{sec_PE}
We first consider the PE problem without target. 
In some real scenarios, target $T$ may be too far to be observed by the two players, or in some state that has no effect on the game. This corresponds to \Cref{prob_1}, the PE game, where $P$ aims to capture $E$ as soon as possible and $E$ wants to defer the capture. According to the objectives of $P$ and $E$ in Problem \ref{prob_1}, the cost function can then be 
\begin{align}
	J = \int_0^{t_f} dt
	\label{eq_J_t}
\end{align}
where $t_f$ is the time when
\begin{align}
	x_P=x_E, \; y_P=y_E, \; z_P=0.
\end{align}

Let $\mathcal{R}_0$ denote a set of position configurations
% We consider the case when the initial position configuration is in the set
\begin{align}
	\mathcal{R}_0=\{\bm{x} \in \mathbb{R}^5  |(x_E-x_P)^2+(y_E-y_P)^2\neq 0\}.
\end{align}
When $\bm{x}\not\in \mathcal{R}_0$, $P$ is above $E$ and the Apollonius circle's center $\bm{c}$ is at $E$. Obviously, $E$ can move along with any direction and $P$ should follow $E$ while approaching the plane. It implies a fact that all strategies for $E$ are equivalent while $P$ should take an action following $E$, indicating that there is no need to decide an optimal strategies for this case.
Therefore, in the following, we consider the case when the initial position configuration is in the set  $\mathcal{R}_0$.

\begin{lemma} \label{lemma1}
	Consider differential game \eqref{eq_dynamics}\eqref{eq_speed_bound}\eqref{eq_speed_ratio}\eqref{eq_J_t} with $\bm{x}\in\mathcal{R}_0$, the optimal strategies of $P$ and $E$ are constant and their trajectories are straight lines.
\end{lemma}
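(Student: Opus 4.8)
The plan is to show that the optimal play keeps both players moving at maximum speed along fixed straight lines. The key observation is that the cost \eqref{eq_J_t} is simply capture time, so this is a time-optimal pursuit game. I would set up the Hamiltonian $H = \lambda_P^\top \bm{u} + \lambda_E^\top \bm{v} + 1$, where $\bm{\lambda}=(\lambda_P,\lambda_E)$ is the costate, and write the adjoint equations $\dot{\bm{\lambda}} = -\partial H/\partial\bm{x}$. Because the dynamics \eqref{eq_dynamics} are a pure integrator (the right-hand sides do not depend on the state $\bm{x}$ at all), the Hamiltonian has no explicit $\bm{x}$-dependence through the dynamics, and $h\equiv 1$ is likewise state-independent. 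Hence $\partial H/\partial\bm{x}=0$, so $\dot{\bm{\lambda}}=0$ and the costate is \emph{constant} along the optimal trajectory. This is the structural engine of the whole argument.

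\medskip

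\noindent\textbf{Deriving the controls.} With $\bm{\lambda}$ constant, the min-max in the HJI equation \eqref{eq_HJI} decouples: $P$ minimizes $\lambda_P^\top\bm{u}$ over the ball $\|\bm{u}\|\le u$, and $E$ maximizes $\lambda_E^\top\bm{v}$ over the ball $\|\bm{v}\|\le v$. By Cauchy--Schwarz the pointwise optimizers are
\begin{align}
	\bm{u}^* = -u\,\frac{\lambda_P}{\|\lambda_P\|}, \qquad
	\bm{v}^* = v\,\frac{\lambda_E}{\|\lambda_E\|},
\end{align}
assuming the relevant costate blocks are nonzero, which I would justify using the terminal transversality condition at capture together with $\bm{x}\in\mathcal{R}_0$ (so the horizontal separation is nonzero and the problem is nondegenerate). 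Since $\lambda_P$ and $\lambda_E$ are constant, both $\bm{u}^*$ and $\bm{v}^*$ are constant vectors of maximal magnitude. Integrating the trivial dynamics then gives $P(t)=P(0)+t\,\bm{u}^*$ and $E(t)=E(0)+t\,\bm{v}^*$, i.e. straight-line trajectories, which is exactly the claim.

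\medskip

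\noindent\textbf{Main obstacle.} The delicate point is not the constancy of the controls but handling the \emph{terminal condition}: capture time $t_f$ is defined implicitly by $x_P=x_E,\ y_P=y_E,\ z_P=0$, so the terminal manifold is a codimension-3 surface and $t_f$ is an exit time rather than a fixed horizon. I would need the transversality/terminal tangency condition for variable-endpoint problems to pin down the boundary value of $\bm{\lambda}$ and, crucially, to rule out the degenerate case $\lambda_P=0$ or $\lambda_E=0$ that would make the Cauchy--Schwarz normalization ill-defined. A clean alternative, which sidesteps the costate bookkeeping, is a direct geometric argument: I would invoke the Apollonius-circle characterization of Section~\ref{sec_formu}, argue that optimal play fixes a single capture point, and then observe that the fastest route of each player to a fixed point is a straight line traversed at top speed. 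Reconciling this geometric shortcut with the formal Pontryagin/HJI derivation — and verifying that the two give the same constant controls — is where I expect the real work to lie.
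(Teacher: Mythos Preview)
Your proposal is correct and follows essentially the same approach as the paper: set up the Hamiltonian, observe that $\partial H/\partial \bm{x}=0$ because the simple-motion dynamics and running cost are state-independent, conclude that the costate is constant, and hence that the optimal controls are constant and the trajectories straight. The paper's own proof is in fact much terser than yours---it stops at ``$\dot{\bm\lambda}=0$, so the controls are constant''---and defers the maximum-speed conclusion to a separate Lemma~\ref{lemma2}; your worries about transversality and costate nondegeneracy, while legitimate, are not addressed in the paper at all.
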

\begin{proof}
	Let the Hamilton function be
	\begin{align}
		H = 1+\lambda_x u_x+\lambda_y u_y+\lambda_z u_z+\mu_x v_x+\mu_y v_y,
	\end{align}
	where $\bm\lambda = (\lambda_x, \lambda_y, \lambda_z, \mu_x, \mu_y)^\top$ is the co-state vector. Since $\dot{\bm\lambda}=\frac{\partial H}{\partial \bm{x}}=0$, the co-states are all constant. Then the optimal strategies are constant as well and the trajectories are straight lines.
\end{proof}
\begin{lemma} \label{lemma2}
    Consider differential game \eqref{eq_dynamics}\eqref{eq_speed_bound}\eqref{eq_speed_ratio}\eqref{eq_J_t} with $\bm{x}\in\mathcal{R}_0$, the optimal strategies of $P$ and $E$ are such that both of them use maximum speed to chase or escape.
\end{lemma}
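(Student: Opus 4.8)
The plan is to apply the necessary conditions of optimality (the Isaacs/Pontryagin conditions for the zero-sum game) to the Hamiltonian $H = 1 + \lambda_x u_x + \lambda_y u_y + \lambda_z u_z + \mu_x v_x + \mu_y v_y$ introduced in \Cref{lemma1}, whose co-states are constant. Since $P$ minimizes and $E$ maximizes $J$, the optimal inputs are $\bm{u}^* = \arg\min_{\|\bm{u}\|\le u}(\lambda_x u_x+\lambda_y u_y+\lambda_z u_z)$ and $\bm{v}^* = \arg\max_{\|\bm{v}\|\le v}(\mu_x v_x+\mu_y v_y)$. By Cauchy--Schwarz, when a given block of the co-state is nonzero the corresponding optimizer is unique and saturates its speed bound, namely $\bm{u}^* = -u\,(\lambda_x,\lambda_y,\lambda_z)^\top/\|(\lambda_x,\lambda_y,\lambda_z)\|$ with $\|\bm{u}^*\|=u$ and $\bm{v}^* = v\,(\mu_x,\mu_y)^\top/\|(\mu_x,\mu_y)\|$ with $\|\bm{v}^*\|=v$; if the block vanishes the objective is constant on the ball and the speed is unconstrained. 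Hence the lemma reduces to showing that neither co-state block is zero on the optimal trajectory.

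To pin the constant co-states down I would use the transversality conditions at the free terminal time $t_f$. The capture manifold $M=\{x_P=x_E,\,y_P=y_E,\,z_P=0\}$ is two-dimensional with tangent space spanned by $(1,0,0,1,0)^\top$ and $(0,1,0,0,1)^\top$, and the terminal cost is identically zero, so $\bm{\lambda}(t_f)$ is orthogonal to $M$; this gives $\mu_x=-\lambda_x$ and $\mu_y=-\lambda_y$, whence $\|(\mu_x,\mu_y)\|=\sqrt{\lambda_x^2+\lambda_y^2}$. Because $H$ is autonomous and $t_f$ is free, $H\equiv 0$ along the optimal trajectory; evaluating it at the optimizers yields $u\|(\lambda_x,\lambda_y,\lambda_z)\|-v\|(\mu_x,\mu_y)\|=1$, i.e. $u\|(\lambda_x,\lambda_y,\lambda_z)\| = 1 + v\|(\mu_x,\mu_y)\| \ge 1$. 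Thus $(\lambda_x,\lambda_y,\lambda_z)\neq\bm{0}$, the pursuer's block is nonzero, and $P$ moves at maximum speed.

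The remaining and, I expect, hardest step is to exclude the evader's degenerate case $(\mu_x,\mu_y)=\bm{0}$, which the two transversality identities alone do not rule out (they are consistent with $\lambda_x=\lambda_y=0$, $\lambda_z\neq0$, giving $u|\lambda_z|=1$). Here the hypothesis $\bm{x}\in\mathcal{R}_0$ is essential. If $(\mu_x,\mu_y)=\bm{0}$ then $\lambda_x=\lambda_y=0$, so $\bm{u}^*=(0,0,-u)^\top$ and $P$ descends along a fixed vertical line, leaving its horizontal position $(x_P,y_P)$ unchanged. Since $\bm{x}\in\mathcal{R}_0$ means $(x_E-x_P)^2+(y_E-y_P)^2\neq0$, the evader starts off this line and, playing to maximize $t_f$, can keep increasing its horizontal distance to $(x_P,y_P)$; then $P$ never closes the horizontal gap and $M$ is never reached, so $t_f=\infty$, contradicting the optimality of a capturing (finite-time) trajectory, which exists because $\delta>1$. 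Therefore $(\mu_x,\mu_y)\neq\bm{0}$, $E$ also moves at maximum speed, establishing the claim.
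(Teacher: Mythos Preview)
Your argument is correct and, in fact, more complete than the paper's own proof. Both you and the paper start from the same place: with constant co-states, the optimal inputs are $\bm{u}^*=\arg\min_{\|\bm{u}\|\le u}(\lambda_x u_x+\lambda_y u_y+\lambda_z u_z)$ and $\bm{v}^*=\arg\max_{\|\bm{v}\|\le v}(\mu_x v_x+\mu_y v_y)$. The paper then simply asserts that minimizing a linear objective over a convex feasible region forces the optimum to the boundary, and stops there. That assertion is only valid when the linear functional is nonzero; the paper never excludes the degenerate case where a co-state block vanishes, nor does it invoke the hypothesis $\bm{x}\in\mathcal{R}_0$ anywhere in the proof. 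You, by contrast, recognize this gap explicitly and close it: the transversality relations $\mu_x=-\lambda_x$, $\mu_y=-\lambda_y$ together with $H\equiv 0$ give $u\|(\lambda_x,\lambda_y,\lambda_z)\|=1+v\|(\mu_x,\mu_y)\|\ge 1$, ruling out a zero pursuer block, and your contradiction argument (vertical descent of $P$ cannot close a nonzero initial horizontal gap) is exactly where $\mathcal{R}_0$ enters to rule out a zero evader block. So the route is the same in spirit but you supply the missing nondegeneracy step; what the paper buys is brevity, what your version buys is an actual proof.
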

\begin{proof}
    From Lemma \ref{lemma1}, the co-state vector are constant. For the pursuer, the optimal strategy can be obtained by 
    \begin{align}
        \bm{u}^*=\arg\min_{\bm{u}} H, ~~ s.t. \|\bm{u}\|\leq u.
    \end{align}
    Since $\bm{u}^*$ is constant w.r.t. time from Lemma \ref{lemma1}, this reduce to a linear programming where the objective is linear and the feasible region is convex. Thus the optimal solution is on the boundary of the feasible region, which means $\|\bm{u}\|= u$. Similar we can show that $\|\bm{v}\|=v$.
\end{proof}

From the above Lemmas, we know that $P$ and $E$ will move at maximum speed with constant direction when using optimal strategies. Then we use Apollonius circle \eqref{eq_A_circle} to determine the value function and the optimal strategy. Since $P$ and $E$ can move to the Apollonius circle simultaneously, the capture must happen on the circle under the two players' optimal strategies. For $E$, it should move to the farthest point $q^*$ on the circle to defer the capture, which is given by
%If $P$ is not right above $E$, i.e.\ $(x_E-x_P)^2+(y_E-y_P)^2\neq 0$, $q^*$ is given by
\begin{align}
	q^* = (\frac{\delta^2x_E-x_P}{\delta^2-1}+r\cos\theta,\frac{\delta^2y_E-y_P}{\delta^2-1}+r\sin\theta,0)^\top,
\end{align}
where
\begin{align}
	&\cos\theta = \frac{x_E-x_P}{\sqrt{(x_E-x_P)^2+(y_E-y_P)^2}}, \\
	&\sin\theta = \frac{y_E-y_P}{\sqrt{(x_E-x_P)^2+(y_E-y_P)^2}}.
\end{align}
And the time for $P$ and $E$ moving to $q^*$ is given by
\begin{align}
	 t_E(\bm{x}) &= \frac{\|q^*-E\|}{v}= \frac{1}{v}(r+\|\bm{c}-E\|) \nonumber\\
	&= \frac{1}{v}\Big(r+\frac{\sqrt{(x_E-x_P)^2+(y_E-y_P)^2}}{\delta^2-1}\;\Big) \nonumber\\
	& = \frac{1}{v(\delta^2-1)}\Big(d+\sqrt{(\delta^2-1)z_P^2+\delta^2d^2}\;\Big),
	\label{eq_tE}
\end{align}
where $d=\sqrt{(x_E-x_P)^2+(y_E-y_P)^2}$.

%If $P$ is right above $E$, i.e. $x_E=x_P,y_E=y_P$, $q^*$ can be anywhere on the circle. And the time for $P$ and $E$ moving to $q^*$ is given by
%\begin{align}
%	 t_E = \frac{r}{v}= \frac{\sqrt{(\delta^2-1)z_P^2+\delta^2d^2}}{v(\delta^2-1)},
%	\label{eq_tE}
%\end{align}

In order to move to $q^*$ directly, the strategies of $P$ and $E$ are given by
\begin{align}
	&u_x^* = u\cos\phi\cos\theta, u_y^* = u\cos\phi\sin\theta, u_z^* = -u\sin\phi, \nonumber\\
	&v_x^* = v\cos\theta, v_y^* = v\sin\theta,
	\label{eq_optimal_u_PE}
\end{align}
where
\begin{align}
	\cos\phi &= \frac{\sqrt{(x_{q^*}-x_P)^2+(y_{q^*}-y_P)^2}}{\|P-q^*\|} \nonumber\\
	&= \frac{\sqrt{(x_{q^*}-x_P)^2+(y_{q^*}-y_P)^2}}{\delta\|q^*-E\|} \nonumber\\
	&= \frac{r+\|c-E\|+d}{\delta(r+\|c-E\|)} \nonumber\\
	&= \frac{\delta^2 d+\sqrt{(\delta^2-1)z_P^2+\delta^2 d^2}}{\delta(d+\sqrt{(\delta^2-1)z_P^2+\delta^2 d^2})}, \\
	\nonumber\\
	\sin\phi &= \frac{z_p}{||P-q^*||} = \frac{z_p}{\delta(r+\|c-E\|)} \nonumber\\
	&= \frac{(\delta^2-1)z_P}{\delta(d+\sqrt{(\delta^2-1)z_P^2+\delta^2 d^2})}.
\end{align}

\begin{theorem}[PE game] \label{th_1}
	For differential game \eqref{eq_dynamics}\eqref{eq_speed_bound}\eqref{eq_speed_ratio}\eqref{eq_J_t}, the optimal strategies are given by \eqref{eq_optimal_u_PE} and the corresponding value function is $V(\bm{x})=t_E(\bm{x})$, where $t_E(\bm{x})$ is given by \eqref{eq_tE}.
\end{theorem}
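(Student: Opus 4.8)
The plan is to verify optimality through the Hamilton--Jacobi--Isaacs framework set up in \eqref{eq_HJI}: I will check that the candidate $V(\bm{x})=t_E(\bm{x})$ is a $C^1$ solution of the HJI equation on $\mathcal{R}_0$, that the strategies \eqref{eq_optimal_u_PE} realize the saddle point of the associated Hamiltonian, and that $V$ carries the correct terminal value. Since the running cost is $h\equiv 1$ and $V$ has no explicit time dependence, $\partial V/\partial t=0$ and the Hamiltonian
\[
H = 1+\lambda_x u_x+\lambda_y u_y+\lambda_z u_z+\mu_x v_x+\mu_y v_y
\]
is affine in both control vectors, where $(\lambda_x,\lambda_y,\lambda_z,\mu_x,\mu_y)=\partial V/\partial\bm{x}$. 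Affineness immediately gives Isaacs' condition $\min_{\bm u}\max_{\bm v}H=\max_{\bm v}\min_{\bm u}H$, so the saddle point is obtained by a bang--bang rule: the minimizing $\bm u^*$ points opposite to $(\lambda_x,\lambda_y,\lambda_z)$ at speed $u$ and the maximizing $\bm v^*$ points along $(\mu_x,\mu_y)$ at speed $v$, consistent with Lemma~\ref{lemma2}.

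First I would compute the costate $\partial V/\partial\bm{x}$. Because $V$ depends on the state only through $d=\sqrt{(x_E-x_P)^2+(y_E-y_P)^2}$ and $z_P$, writing $S=\sqrt{(\delta^2-1)z_P^2+\delta^2 d^2}$ the chain rule yields $\partial V/\partial d=\tfrac{1}{v(\delta^2-1)}\cdot\tfrac{S+\delta^2 d}{S}$ and $\partial V/\partial z_P=\tfrac{z_P}{vS}$, together with $\partial d/\partial x_E=\cos\theta$, $\partial d/\partial y_E=\sin\theta$ and the opposite signs for the $P$-coordinates. Hence $(\mu_x,\mu_y)=(\partial V/\partial d)(\cos\theta,\sin\theta)$ and $(\lambda_x,\lambda_y)=-(\partial V/\partial d)(\cos\theta,\sin\theta)$, while $\lambda_z=\partial V/\partial z_P$. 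Normalizing, the evader's bang--bang control is $v(\cos\theta,\sin\theta)$, matching $v_x^*,v_y^*$ exactly, and the pursuer's horizontal part lies along $(\cos\theta,\sin\theta)$ with a downward vertical part; identifying the normalizing factors with $\cos\phi$ and $\sin\phi$ reproduces \eqref{eq_optimal_u_PE}.

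The crux is then the single scalar identity that makes $H$ vanish at the saddle,
\[
u\,\big\|(\lambda_x,\lambda_y,\lambda_z)\big\|-v\,\big\|(\mu_x,\mu_y)\big\|=1,
\]
that is $u\sqrt{(\partial V/\partial d)^2+(\partial V/\partial z_P)^2}-v(\partial V/\partial d)=1$. Substituting the expressions above and using $u=\delta v$, this reduces to an algebraic identity in $d$ and $z_P$ which I would verify by squaring and simplifying with $S^2=(\delta^2-1)z_P^2+\delta^2 d^2$. Finally, on the portion of the boundary of $\mathcal{R}_0$ corresponding to capture ($d\to0$, $z_P\to0$) both $S$ and $t_E$ vanish, so $V=0$ there, confirming the terminal condition; together with the saddle-point property this certifies, through the standard verification theorem, that \eqref{eq_optimal_u_PE} is optimal and $V=t_E$ is the value.

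I expect the main obstacle to be the costate/norm computation of the previous paragraph and the final scalar identity: the partial derivatives of $t_E$ carry the square root $S$, and showing that $u\sqrt{(\partial V/\partial d)^2+(\partial V/\partial z_P)^2}-v(\partial V/\partial d)$ collapses exactly to $1$ (rather than to some state-dependent quantity) is where all the structure of the Apollonius geometry must cancel. A secondary point needing care is the $C^1$ regularity of $V$ throughout $\mathcal{R}_0$ and the behaviour as the state approaches the capture manifold, since $S$ appears in the denominators.
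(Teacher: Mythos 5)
Your proposal is correct and follows the same overall strategy as the paper: both verify that $V=t_E$ is a $C^1$ solution of the HJI equation \eqref{eq_HJI} on $\mathcal{R}_0$ with the strategies \eqref{eq_optimal_u_PE}, using the same gradient computation (the paper writes out all five partials explicitly; your chain-rule reduction through $d$ and $z_P$ gives the identical costate). Where you differ is in how the verification is organized, and the difference is to your advantage. The paper substitutes the specific strategies \eqref{eq_optimal_u_PE} into $\nabla V\cdot f+1$ and grinds the resulting trigonometric expression down to zero term by term; it never explicitly checks that those strategies are the minimizer and maximizer of the Hamiltonian, which is what the verification theorem actually requires. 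You instead exploit the affineness of $H$ in the controls to read off the bang--bang saddle point directly from the costate ($\bm u^*$ antiparallel to $(\lambda_x,\lambda_y,\lambda_z)$, $\bm v^*$ parallel to $(\mu_x,\mu_y)$), confirm these coincide with \eqref{eq_optimal_u_PE}, and collapse the whole HJI check to the single scalar identity $u\|(\lambda_x,\lambda_y,\lambda_z)\|-v\|(\mu_x,\mu_y)\|=1$. That identity does hold: with $S=\sqrt{(\delta^2-1)z_P^2+\delta^2 d^2}$ one gets $\|(\lambda_x,\lambda_y,\lambda_z)\|=\delta(S+d)/(v(\delta^2-1)S)$ and $\|(\mu_x,\mu_y)\|=(S+\delta^2 d)/(v(\delta^2-1)S)$, whose weighted difference is exactly $1$. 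So your route is both shorter and slightly more complete than the paper's, since it certifies the Isaacs min--max structure rather than only evaluating the Hamiltonian along the candidate pair.
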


\begin{proof}
	We should show that the value function is $C^1$ and it satisfies HJI function \eqref{eq_HJI} with the proposed optimal strategies. %It stands out that $V=t_E$ is continuous in $\mathcal{R}_0$.
	First, the gradient of the value function can be derived as
	\begin{align}
		\frac{\partial V}{\partial x_P} &= \frac{1}{(\delta^2-1)v}\Big(\frac{x_P-x_E}{d}+\frac{\delta^2 (x_P-x_E)}{\sqrt{(\delta^2-1)z_P^2+\delta^2 d^2}}\Big), \\
		\frac{\partial V}{\partial y_P}& = \frac{1}{(\delta^2-1)v}\Big(\frac{y_P-y_E}{d}+\frac{\delta^2 (y_P-y_E)}{\sqrt{(\delta^2-1)z_P^2+\delta^2 d^2}}\Big), \\
		\frac{\partial V}{\partial z_P} &= \frac{z_P}{v\sqrt{(\delta^2-1)z_P^2+\delta^2 d^2}}, \\
		\frac{\partial V}{\partial x_E}& = \frac{1}{(\delta^2-1)v}\Big(\frac{x_E-x_P}{d}+\frac{\delta^2 (x_E-x_P)}{\sqrt{(\delta^2-1)z_P^2+\delta^2 d^2}}\Big), \nonumber\\
		~~~~&=-\frac{\partial V}{\partial x_P}, \\
		\frac{\partial V}{\partial y_E} &= \frac{1}{(\delta^2-1)v}\Big(\frac{y_E-y_P}{d}+\frac{\delta^2 (y_E-y_P)}{\sqrt{(\delta^2-1)z_P^2+\delta^2 d^2}}\Big), \nonumber\\
		~~~~&=-\frac{\partial V}{\partial y_P}.
	\end{align}
	Since the optimal strategies are constant for $P$ and $E$, $\bm{x}\in\mathcal{R}_0$ holds until $t_f$. Thus $V$ is $C^1$.
	
	Then, substituting optimal strategies \eqref{eq_optimal_u_PE} into HJI equation \eqref{eq_HJI}, we get
	\begin{align}
		&\frac{\partial V}{\partial x_P}u_x^* +\frac{\partial V}{\partial y_P}u_y^* +\frac{\partial V}{\partial z_P}u_z^* + \frac{\partial V}{\partial x_E} v_x^* +\frac{\partial V}{\partial y_E} v_y^* +1 \nonumber\\
		= &\frac{\partial V}{\partial x_P}(u_x^*-v_x^*) + \frac{\partial V}{\partial y_P}(u_y^*-v_y^*) + \frac{\partial V}{\partial z_P}u_z^* +1 \nonumber\\
		= &\frac{\partial V}{\partial x_P}v\cos\theta(\delta\cos\phi-1)\nonumber\\
		& + \frac{\partial V}{\partial y_P}v\sin\theta(\delta\cos\phi-1) - \frac{\partial V}{\partial z_P}\delta v\sin\phi +1 \nonumber\\
		=&1+ \frac{\delta\cos\phi-1}{\delta^2-1}(\frac{1}{d}+\frac{\delta^2}{\sqrt{(\delta^2-1)z_P^2+\delta^2 d^2}})\nonumber\\
		&\big((x_P-x_E)\cos\theta+(y_P-y_E)\sin\theta\big) - \frac{\partial V}{\partial z_P}\delta v\sin\phi \nonumber\\
		=&1- \frac{d}{d+\sqrt{(\delta^2-1)z_P^2+\delta^2 d^2}}\nonumber\\
		&-\frac{\delta^2 d^2}{(d+\sqrt{(\delta^2-1)z_P^2+\delta^2 d^2})\sqrt{(\delta^2-1)z_P^2+\delta^2 d^2}}\nonumber\\
		& - \frac{(\delta^2-1)z_P^2}{(d+\sqrt{(\delta^2-1)z_P^2+\delta^2 d^2})\sqrt{(\delta^2-1)z_P^2+\delta^2 d^2}} \nonumber\\
		=&1-\frac{d}{d+\sqrt{(\delta^2-1)z_P^2+\delta^2 d^2}}% \nonumber\\
		-\frac{\sqrt{(\delta^2-1)z_P^2+\delta^2 d^2}}{d+\sqrt{(\delta^2-1)z_P^2+\delta^2 d^2}}\nonumber\\
		=&0.
	\end{align}
	Since $\frac{\partial V}{\partial t}=0$, the value function and the proposed optimal strategies satisfy HJI equation. Therefore, the value of the game is given by $V(\bm{x})=t_E(\bm{x})$ and the optimal strategies are \eqref{eq_optimal_u_PE}.
\end{proof}

\begin{remark}
    Note that the strategies defined by \eqref{eq_optimal_u_PE} are exactly feedback strategies and are optimal in the sense of Nash equilibrium. That is, if $P$ or $E$ change its strategies unilaterally, its cost will increase, which will be verified in Section \ref{sec_simu}. Thus, Problem \ref{prob_1} is solved.
\end{remark}

%%%%%%%%%%%%%%%%%%%%%%%%%%%%%%%%%%%%%%%%%%%%%%%%%%%%%%%%%%%%%%%%%%%%%%%%%%%%%%%%
\section{TAD game} \label{sec_TAD}
In this section, we consider the case when there is a target, which corresponds to \Cref{prob_2}.
Without loss of generality, suppose the target is located at $O=(0,0,0)^\top$ and cannot move. $E$ aims to move to the target as close as possible in the final. The practical meaning is clear, $E$ should try its best to approach the target, expecting the lucky case when $P$ has some mistakes and cannot perform optimal strategy. For eagle-hare game, the target can be the hole of the hare. For UAV-tank game, the target may be some small area with electromagnetic interference, which is dangerous for UAV, or the target may be some object to be destroyed. As the game can be ended with two results, we first solve the game of kind \cite{Garcia2018}.

\subsection{Game of kind}
For the two results of the game, i.e.\ $P$ wins or $E$ wins, two terminal sets are defined
\begin{align}
	&S_P = \{\bm{x}\in \mathbb{R}^5|x_P=x_E,y_P=y_E,z_P=0\},\\
	&S_E = \{\bm{x}\in \mathbb{R}^5|x_E=0,y_E=0\}.
\end{align}
The initial configuration of two players' positions can be divided into two sets, $W_P$ and $W_S$, according to that who will win the game. In the first set $W_P$, $\bm{x}$ will first enter $S_P$, which means $P$ will win the game. While in the second set $W_E$, $\bm{x}$ will first enter $S_E$, which means $E$ will win the game. According to Apollonius circle, it is easy to determine these two sets.
\begin{align}
	&W_P = \{\bm{x}\in \mathbb{R}^5| \text{O is outside Apollonius circle}\},\\
	&W_E = \{\bm{x}\in \mathbb{R}^5| \text{O is inside Apollonius circle}\}, 
\end{align}
which are exactly
\begin{align}
	&W_P = \{\bm{x}\in \mathbb{R}^5| \delta^2(x_E^2+y_E^2)-(x_P^2+y_P^2+z_P^2)>0\},\\
	&W_E = \{\bm{x}\in \mathbb{R}^5| \delta^2(x_E^2+y_E^2)-(x_P^2+y_P^2+z_P^2)<0\}.
\end{align}
Then $B(\bm{x}) = \delta^2(x_E^2+y_E^2)-(x_P^2+y_P^2+z_P^2)$ is the barrier function. Thus we partially solve \Cref{prob_2} about determining which will win the game. Next we will give the optimal strategies of the players.

\subsection{Game of degree}
In the case of $\bm{x}\in W_E$, $E$ can easily move to the target before $P$ and win the game. Hoping to taking the chance to avoid the evader, the pursuer should move to the target as soon as possible. Thus the cost function for $P$ is
\begin{align}
	J_P = \sqrt{(x_P-x_E)^2 + (y_P-y_E)^2}\Big|_{t_f}.
\end{align}
And the optimal strategies of the two player are to move to the target directly. The value function is 
\begin{align}
	V(\bm{x})=(x_P^2+y_P^2+z_P^2)-\delta^2(x_E^2+y_E^2).
\end{align}
Due to space limits, we omit the simple verification of this case.

In the case of $\bm{x}\in W_P$, $P$ will catch $E$ and win the game. According to the objectives of $P$ and $E$ in Problem \ref{prob_2}, the cost function is
\begin{align}
	J = \Phi(t_f) = \|E-O\|\Big|_{t_f} = \sqrt{x_E^2+y_E^2}\Big|_{t_f},
	\label{eq_J_TAD}
\end{align}
and $P$ and $E$ will find strategies to maximize or minimize $J$,
which means $E$ will try its best to get close to $T$ while $P$ will prevent this. The optimal strategies and the value function in this case will be given and verified in the following.

When $\bm{x}\in W_P$, $E$ will finally be captured by $P$ and it should reduce the distance to the target at the end of the game. 
Since the running term (integral term) in \eqref{eq_J_TAD} is zero, similar to \Cref{lemma1} and \ref{lemma2}, the optimal strategies of $P$ and $E$ are constant and the maximum speeds are achieved as well.
Due to that the intersection of isochrones of the game is a circle, the optimal strategy of $E$ is not to move to the target directly. In order to get close to the target in the final, $E$ should move to the nearest point to the target on the circle. Denote this point as $g^*$,
\begin{align}
	g^* = \Big(\frac{\delta^2x_E-x_P}{\delta^2-1}-r\cos\psi,\frac{\delta^2 y_E-y_P}{\delta^2-1}-r\sin\psi,0\Big)^\top,
\end{align}
where
\begin{align}
	\cos\psi = \frac{\delta^2x_E-x_P}{\sqrt{(\delta^2x_E-x_P)^2+(\delta^2 y_E-y_P)^2}}, \\
	\sin\psi = \frac{\delta^2 y_E-y_P}{\sqrt{(\delta^2x_E-x_P)^2+(\delta^2 y_E-y_P)^2}}.
\end{align}
Then, if $P$ and $E$ both move to $g^*$, they will meet there. We can derive the final distance from $E$ to $T$ as
\begin{align}
	L_f(\bm{x}) = \|g^*\| = \|\bm{c}\|-r % \nonumber\\
	= \frac{d_\delta}{\delta^2-1}-r,
	\label{eq_Lf}
\end{align}
where $r$ is given in \eqref{eq_radius} and 
\begin{align}
	&d_\delta = \sqrt{(\delta^2x_E-x_P)^2+(\delta^2 y_E-y_P)^2}.
\end{align}
Since $\bm{x}\in W_P$, $d_\delta\neq 0$.

When P and E move to $g^*$ directly, the strategies are derived in the following.

For E, 
\begin{align}
	g^*-E = \big(\frac{x_E-x_P}{\delta^2-1}-r\cos\psi,\frac{y_E-y_P}{\delta^2-1}-r\sin\psi,0\big)^\top,
\end{align}
and its strategy to move to $g^*$ is 
\begin{align}
	v_x^* = v\cos\alpha,\; v_y^* = v\sin\alpha,
	\label{eq_v_TAD}
\end{align}
where
\begin{align}
	\cos\alpha &= \frac{\frac{x_E-x_P}{\delta^2-1}-r\cos\psi}{\sqrt{(\frac{x_E-x_P}{\delta^2-1}-r\cos\psi)^2+(\frac{y_E-y_P}{\delta^2-1}-r\sin\psi)^2}} \nonumber\\
	&= \frac{x_E-x_P-r(\delta^2-1)\cos\psi}{\sqrt{d^2+(\delta^2-1)^2r^2-2r(\delta^2-1)f}}, \\
	\sin\alpha & = \frac{\frac{y_E-y_P}{\delta^2-1}-r\sin\psi}{\sqrt{(\frac{x_E-x_P}{\delta^2-1}-r\cos\psi)^2+(\frac{y_E-y_P}{\delta^2-1}-r\sin\psi)^2}} \nonumber\\
	&= \frac{y_E-y_P-r(\delta^2-1)\sin\psi}{\sqrt{d^2+(\delta^2-1)^2r^2-2r(\delta^2-1)f}}, 
\end{align}
and
\begin{align}
	f=(x_E-x_P)\cos\psi+(y_E-y_P)\sin\psi.
\end{align}

For P,
\begin{align}
	g^*-P = &\Big(\frac{\delta^2(x_E-x_P)}{\delta^2-1}-r\cos\psi,\nonumber\\
	&~~~~\frac{\delta^2(y_E-y_P)}{\delta^2-1}-r\sin\psi,-z_P\Big)^\top,
\end{align}
and its strategy to move to $g^*$ is
\begin{align}
	u_x^*=u\cos\beta\cos\gamma, u_y^*=u\sin\beta\cos\gamma, u_z^*=-u\sin\gamma,
	\label{eq_u_TAD}
\end{align}
where
\begin{align}
	%&\sin\gamma = \frac{z_P}{\sqrt{(\frac{\delta^2(x_E-x_P)}{\delta^2-1}-r\cos\psi)^2+(\frac{\delta^2(y_E-y_P)}{\delta^2-1}-r\sin\psi)^2+z_P^2}} \nonumber\\
	&\sin\gamma = \frac{z_P}{\|g^*-P\|} \nonumber\\
	= &\frac{(\delta^2-1)z_P}{\sqrt{\delta^4d^2+(\delta^2-1)^2r^2-2r\delta^2(\delta^2-1)f+(\delta^2-1)^2 z_P^2}},\\
	%&\cos\gamma = \frac{\sqrt{(\frac{\delta^2(x_E-x_P)}{\delta^2-1}-r\cos\psi)^2+(\frac{\delta^2(y_E-y_P)}{\delta^2-1}-r\sin\psi)^2}}{\sqrt{(\frac{\delta^2(x_E-x_P)}{\delta^2-1}-r\cos\psi)^2+(\frac{\delta^2(y_E-y_P)}{\delta^2-1}-r\sin\psi)^2+z_P^2}}\nonumber\\
	&\cos\gamma = \sqrt{1-\sin^2\gamma} \nonumber\\
	= &\frac{\sqrt{\delta^4d^2+(\delta^2-1)^2r^2-2r\delta^2(\delta^2-1)f}}{\sqrt{\delta^4d^2+(\delta^2-1)^2r^2-2r\delta^2(\delta^2-1)f+(\delta^2-1)^2 z_P^2}},
\end{align}
\begin{align}
	&\cos\beta = \frac{\frac{\delta^2(x_E-x_P)}{\delta^2-1}-r\cos\psi}{\sqrt{(\frac{\delta^2(x_E-x_P)}{\delta^2-1}-r\cos\psi)^2+(\frac{\delta^2(y_E-y_P)}{\delta^2-1}-r\sin\psi)^2}} \nonumber\\
	&= \frac{\delta^2(x_E-x_P)-r(\delta^2-1)\cos\psi}{\sqrt{\delta^4d^2+(\delta^2-1)^2r^2-2r\delta^2(\delta^2-1)f}}, \\
	&\sin\beta = \frac{\frac{\delta^2(y_E-y_P)}{\delta^2-1}-r\sin\psi}{\sqrt{(\frac{\delta^2(x_E-x_P)}{\delta^2-1}-r\cos\psi)^2+(\frac{\delta^2(y_E-y_P)}{\delta^2-1}-r\sin\psi)^2}} \nonumber\\
	&= \frac{\delta^2(y_E-y_P)-r(\delta^2-1)\sin\psi}{\sqrt{\delta^4d^2+(\delta^2-1)^2r^2-2r\delta^2(\delta^2-1)f}}.
\end{align}

\begin{remark}
    The target point $g^*$ and the strategies moving to $g^*$ are dependent on the current positions of $P$ and $E$. Thus, the strategies are feedback strategies and optimal in the sense of Nash equilibrium by the following Theorem.
\end{remark}

%\textbf{Verification of value function and optimal control}
\begin{theorem}[TAD game]
	Consider differential game \eqref{eq_dynamics}\eqref{eq_speed_bound}\eqref{eq_speed_ratio}\eqref{eq_J_TAD} with $\bm{x}\in W_P$, the value function of the game is $V(\bm{x})=L_f(\bm{x})$ and the optimal strategies are given by \eqref{eq_u_TAD}\eqref{eq_v_TAD}.
\end{theorem}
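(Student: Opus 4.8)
The plan is to follow the verification route used for Theorem~\ref{th_1}: exhibit $V(\bm{x})=L_f(\bm{x})$ as a $C^1$ function on $W_P$ that satisfies the HJI equation \eqref{eq_HJI} when the proposed feedback controls \eqref{eq_u_TAD}\eqref{eq_v_TAD} are inserted, and check that it carries the correct terminal data. First, exactly as in Lemmas~\ref{lemma1} and \ref{lemma2}, the running cost in \eqref{eq_J_TAD} vanishes, so the co-states are constant and the optimal controls are constant, maximum-speed, straight-line motions; hence $P$ and $E$ travel along fixed rays toward the rendezvous point $g^*$, and (as verified below) the configuration stays in $W_P$ until $t_f$, which makes $V=\|\bm{c}\|-r$ smooth along the trajectory. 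For the terminal condition, at capture $d=z_P=0$ gives $r=0$ and $d_\delta=(\delta^2-1)\sqrt{x_E^2+y_E^2}$, so $V=\|\bm{c}\|-r=\sqrt{x_E^2+y_E^2}=\|E-O\|$, matching $J=\Phi(t_f)$ in \eqref{eq_J_TAD}.

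The core is the HJI check. Since $h=0$ and $\partial V/\partial t=0$, \eqref{eq_HJI} reduces to $\frac{\partial V}{\partial\bm{x}}(\bm{u}^*,\bm{v}^*)^\top=\frac{d}{dt}V(\bm{x}(t))=0$ along the optimal motion. Rather than differentiating the explicit gradient of $\|\bm{c}\|-r$ and grinding through the angles $\alpha,\beta,\gamma,\psi$ (the direct analogue of the computation in Theorem~\ref{th_1}), I would argue geometrically that $g^*$ is an \emph{invariant} rendezvous point. Because both players move straight toward the fixed point $g^*$ at constant speeds with ratio $\delta$, and $g^*$ lies on the initial Apollonius circle so that $\|g^*-P(0)\|=\delta\|g^*-E(0)\|$, the relations $\|g^*-P(t)\|=\|g^*-P(0)\|-ut$ and $\|g^*-E(t)\|=\|g^*-E(0)\|-vt$ together with $u=\delta v$ give $\|g^*-P(t)\|=\delta\|g^*-E(t)\|$ for all $t\le t_f$; thus $g^*$ remains on the Apollonius circle throughout, and both players reach it at the common time $t_f=\|g^*-E(0)\|/v$.

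It then remains to show that $g^*$ stays the point of that circle nearest $O$, i.e.\ that $V(\bm{x}(t))=\|g^*\|$ is constant. Writing $\bm{c}=\frac{1}{\delta^2-1}(\delta^2 E-P)$ and using $u/\|g^*-P\|=v/\|g^*-E\|$, a short computation shows that $\delta^2\bm{v}^*-\bm{u}^*_{\parallel}$ (with $\bm{u}^*_{\parallel}$ the planar part of $\bm{u}^*$) is a negative multiple of $g^*-\bm{c}=-r\hat{\bm{c}}$, so $\dot{\bm{c}}$ points along $\hat{\bm{c}}=(\cos\psi,\sin\psi)^\top$. Hence the center slides along the ray $O\bm{c}$, keeping $O$, $g^*=\bm{c}-r\hat{\bm{c}}$ and $\bm{c}$ collinear with $g^*$ nearest $O$; the nearest-point distance therefore equals the fixed number $\|g^*\|$ for all $t\in[0,t_f]$, giving $\frac{d}{dt}V=0$, which is precisely the required HJI identity. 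The simultaneous monotone decrease of $r$ to $0$ and $\|\bm{c}\|\to\|g^*\|>0$ also confirm $\bm{x}(t)\in W_P$ (i.e.\ $d_\delta>0$ and $\|\bm{c}\|-r>0$) and $V\in C^1$ up to $t_f$.

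The main obstacle is not the stationarity $\frac{d}{dt}V=0$, which the geometric argument settles cleanly, but establishing that \eqref{eq_u_TAD}\eqref{eq_v_TAD} form a genuine saddle point, i.e.\ the Isaacs min--max condition rather than a mere critical point. Because the dynamics are separable, $\frac{\partial V}{\partial\bm{x}}(\bm{u},\bm{v})^\top=\nabla_P V\cdot\bm{u}+\nabla_E V\cdot\bm{v}$, so the extremizing controls must align with $\nabla_P V$ and $\nabla_E V$ (signs fixed by the max/min) under the balance $u\|\nabla_P V\|=v\|\nabla_E V\|$; I would verify that the headings toward $g^*$ in \eqref{eq_u_TAD}\eqref{eq_v_TAD} are exactly these gradient-aligned directions and that the magnitude balance holds, which is where the explicit gradient of $\|\bm{c}\|-r$ and the identity $\|g^*-P\|=\delta\|g^*-E\|$ must be used. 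This bookkeeping is the step most likely to be delicate, and---as in the remark following Theorem~\ref{th_1}---the resulting Nash optimality would be cross-checked numerically in \Cref{sec_simu}.
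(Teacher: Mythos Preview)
Your argument is correct and takes a genuinely different route from the paper's. The paper proceeds by direct computation: it writes out all five partial derivatives of $V=\frac{d_\delta}{\delta^2-1}-r$, substitutes the controls \eqref{eq_u_TAD}\eqref{eq_v_TAD}, splits $H$ into a $P$-block and an $E$-block, and after a page of algebra arrives at $H=\frac{\delta v(d_{Pg}-\delta d_{Eg})}{r(\delta^2-1)}$, which vanishes because $g^*$ lies on the Apollonius circle. You instead show that $g^*$ is an invariant rendezvous point of the closed-loop flow: $g^*$ stays on the evolving circle by the ratio argument, and $\bm{c}(t)$ slides along the fixed ray $O\bm{c}$ because $\dot{\bm{c}}=\frac{1}{\delta^2-1}(\delta^2\bm{v}^*-\bm{u}^*_\parallel)$ is proportional to $g^*-\bm{c}$ (a minor sign slip in your write-up: this multiple is \emph{positive}, so $\dot{\bm{c}}$ actually points along $-\hat{\bm{c}}$, toward $O$; the conclusion that $\bm{c}$ remains on the ray is unaffected). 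Hence $g^*$ stays the point of the circle nearest $O$, so $V=\|\bm{c}\|-r=\|g^*\|$ is manifestly constant along trajectories and the HJI identity follows without ever writing down $\nabla V$. Your route trades the paper's mechanical but self-contained calculation for a structural picture that explains \emph{why} the identity holds; the paper's route, on the other hand, produces $\nabla V$ explicitly, which is precisely what you would need for the gradient-alignment saddle-point check you flag at the end. On that last point, note that the paper itself only verifies \eqref{eq_HJI} with $\bm{u}^*,\bm{v}^*$ already inserted and does not carry out the Isaacs $\min_{\bm u}\max_{\bm v}$ extremization either, so your proposal already matches its level of rigor.
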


\begin{proof}
	Similar to the proof of \Cref{th_1}, we should show that value function is $C^1$ and satisfies HJI function with optimal strategies.
	
	First, the gradient of $V$ can be calculated as 
	\begin{align}
		&\frac{\partial V}{\partial x_P}=\frac{x_P-\delta^2 x_E}{(\delta^2-1)d_\delta}-\frac{\delta^2(x_P-x_E)}{(\delta^2-1)\sqrt{(\delta^2-1)z_P^2+\delta^2 d^2}}, \\
		&\frac{\partial V}{\partial y_P}=\frac{y_P-\delta^2 y_E}{(\delta^2-1)d_\delta}-\frac{\delta^2(y_P-y_E)}{(\delta^2-1)\sqrt{(\delta^2-1)z_P^2+\delta^2 d^2}}, \\
		&\frac{\partial V}{\partial z_P}=-\frac{z_P}{\sqrt{(\delta^2-1)z_P^2+\delta^2 d^2}},\\
		&\frac{\partial V}{\partial x_E}=\frac{\delta^2(\delta^2 x_E-x_P)}{(\delta^2-1)d_\delta}-\frac{\delta^2(x_E-x_P)}{(\delta^2-1)\sqrt{(\delta^2-1)z_P^2+\delta^2 d^2}}, \\
		&\frac{\partial V}{\partial y_E}=\frac{\delta^2(\delta^2 y_E-y_P)}{(\delta^2-1)d_\delta}-\frac{\delta^2(y_E-y_P)}{(\delta^2-1)\sqrt{(\delta^2-1)z_P^2+\delta^2 d^2}}.
	\end{align}
	With $\bm{x}\in W_P$, one can obtain that the denominators are not zero until $E$ is captured by $P$ at $g^*$.
	
	Then, according to \eqref{eq_HJI}, since $\frac{\partial V}{\partial t}=0$, the other side must be zero. Let
	\begin{align}
		H=\frac{\partial V}{\partial x_P}u_x^* +\frac{\partial V}{\partial y_P}u_y^* +\frac{\partial V}{\partial z_P}u_z^* + \frac{\partial V}{\partial x_E} v_x^* +\frac{\partial V}{\partial y_E} v_y^*.
	\end{align}
	Using some symmetrical properties, we calculate $H$ by dividing it into two parts, $\frac{\partial V}{\partial x_P}u_x^* +\frac{\partial V}{\partial y_P}u_y^* +\frac{\partial V}{\partial z_P}u_z^*$  and $\frac{\partial V}{\partial x_E} v_x^* +\frac{\partial V}{\partial y_E} v_y^*$.
	
	We first deal with first part about $P$. 
	\begin{align}
		&\frac{\partial V}{\partial x_P}u_x^*+\frac{\partial V}{\partial y_P}u_y^*\nonumber\\ = & \frac{x_P-\delta^2 x_E}{(\delta^2-1)d_\delta}u_x^*+\frac{y_P-\delta^2 y_E}{(\delta^2-1)d_\delta}u_y^* \nonumber\\
		&-\frac{\delta^2(x_P-x_E)}{(\delta^2-1)\sqrt{(\delta^2-1)z_P^2+\delta^2 d^2}}u_x^* \nonumber\\
		&-\frac{\delta^2(y_P-y_E)}{(\delta^2-1)\sqrt{(\delta^2-1)z_P^2+\delta^2 d^2}} u_y^* \nonumber\\
		=& \frac{-\delta^2f+r(\delta^2-1)}{(\delta^2-1)^2 d_{Pg}}u +\frac{\delta^4d^2-r\delta^2(\delta^2-1)f}{r(\delta^2-1)^3 d_{Pg}}u,
	\end{align}
	where 
	\begin{align}
		&d_{Pg} = \|g^*-P\| \nonumber\\
		=&\frac{\sqrt{\delta^4d^2+(\delta^2-1)^2r^2-2r\delta^2(\delta^2-1)f+(\delta^2-1)^2 z_P^2}}{\delta^2-1}.
	\end{align}
	And
	\begin{align}
		&\frac{\partial V}{\partial z_P}u_z^* = \frac{z_P^2}{r(\delta^2-1) d_{Pg}}u.
	\end{align}
	Then one can have,
	\begin{align}
		\frac{\partial V}{\partial x_P}u_x^*+\frac{\partial V}{\partial y_P}u_y^*+\frac{\partial V}{\partial z_P}u_z^*=\frac{d_{Pg}}{r(\delta^2-1)}\delta v.
	\end{align}
	
	Second, we turn to the second part about $E$.
	\begin{align}
		&\frac{\partial V}{\partial x_E}v_x^*+\frac{\partial V}{\partial y_E}v_y^* \nonumber\\
		=& \frac{\delta^2(\delta^2 x_E-x_P)}{(\delta^2-1)d_\delta}v_x^* + \frac{\delta^2(\delta^2 y_E-y_P)}{(\delta^2-1)d_\delta}v_y^* \nonumber\\
		&-\frac{\delta^2(x_E-x_P)}{(\delta^2-1)\sqrt{(\delta^2-1)z_P^2+\delta^2 d^2}}v_x^* \nonumber\\
		&-\frac{\delta^2(y_E-y_P)}{(\delta^2-1)\sqrt{(\delta^2-1)z_P^2+\delta^2 d^2}}v_y^* \nonumber\\
		=& \frac{\delta^2 f-r\delta^2(\delta^2-1)}{(\delta^2-1)^2 d_{Eg}}v + \frac{-\delta^2d^2+r\delta^2(\delta^2-1)f}{r(\delta^2-1)^3 d_{Eg}}v \nonumber\\
		&= \frac{-\delta^2 d_{Eg}}{r(\delta^2-1)}v,
	\end{align}
	where
	\begin{align}
		d_{Eg} &= \|g^*-E\| \nonumber\\
		&=\frac{\sqrt{d^2+(\delta^2-1)^2r^2-2r(\delta^2-1)f}}{\delta^2-1}.
	\end{align}
	
	Thus,
	\begin{align}
		H = \frac{d_{Pg}}{r(\delta^2-1)}\delta v+\frac{-\delta^2 d_{Eg}}{r(\delta^2-1)}v = \frac{\delta v(d_{Pg}-\delta d_{Eg})}{r(\delta^2-1)}.
	\end{align}
	Since $g^*$ is on the Apollonius circle, $\frac{d_{Pg}}{d_{Eg}} =\delta $, which means $H=0$. We have shown that $V(\bm{x})=L_f(\bm{x})$ is $C^1$ and is the solution of HJI equation, for which \eqref{eq_u_TAD}\eqref{eq_v_TAD} are optimal strategies of the differential game.
\end{proof}

Thus, we solve the remainder of \Cref{prob_2} by giving optimal strategies of two players for both cases.

\begin{figure}[tpb]
	\centering
	\includegraphics[width=1\linewidth]{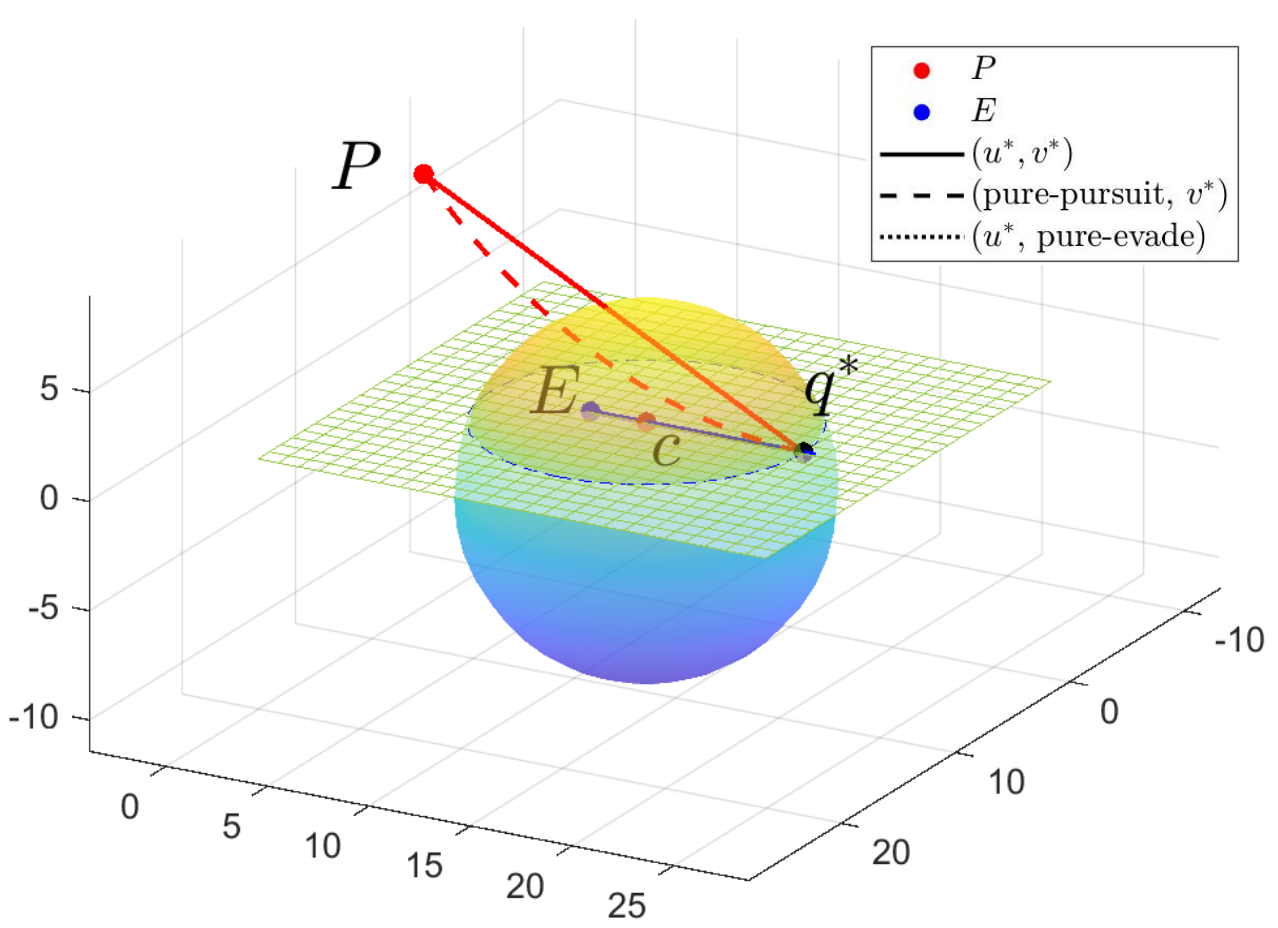}
	\caption{Results of PE game under different strategies. The initial positions of $P$ and $E$ are marked. Red lines and blue lines represent the trajectories of $P$ and $E$ respectively. Besides, the initial Apollonius circle are computed using the initial positions of $P$ and $E$, for which the center $c$, and the goal point $q^*$ are marked. 
% 	Red lines and blue lines represent the trajectories of $P$ and $E$ respectively. Solid lines are the results of $(u^*,v^*)$, dash lines are the results of $(\text{pure-pursuit},v^*)$, and dotted lines are the results of $(u^*,\text{pure-evade})$. 
	The ``Apollonius sphere'' are also plotted and the corresponding Apollonius circle in the plane are marked as dash curves.}
	\label{fig_PE}
\end{figure}

\section{SIMULATION RESULTS} \label{sec_simu}
In this section, we will give some simulations to show that our proposed strategies are optimal. Both two games are simulated under the optimal strategies with a comparison to other strategies, pure-pursuit, pure-evade and task-oriented. Pure-pursuit strategy means that the pursuer moves to the evader directly while pure-evade strategy means that the evader just moves to the opposite direction of the pursuer in the plane. Task-oriented strategy means that the player moves to the target directly.

First, consider the PE game between the pursuer and the evader, i.e., there is no target or the target is too far to be observed. Let the initial positions be $P=(8.00, 1.42, 9.41)^\top, E=(7.92, 9.60,0)^\top$ and the speed ratio is $\delta=2$. In this example, $P$ should capture $E$ as soon as possible (minimize $J$ in \eqref{eq_J_t}) while $E$ aims to delay the capture (maximize $J$ in \eqref{eq_J_t}). 
%According to \eqref{eq_A_circle}, the center of the Apollonius circle is $\bm{c}=(7.89,12.33,0)^\top$ and the radius is $r=7.70$. Using the optimal strategies \eqref{eq_optimal_u_PE}, $P$ and $E$ should move directly to $q^*=(7.82, 20.02, 0)^\top$. Note that the above explicit values of the Apollonius circle and $q^*$ are calculated using the initial positions of $P$ and $E$. 
In order to illustrate the optimality of the proposed strategy \eqref{eq_optimal_u_PE}, pure-pursuit and pure-evade strategies are simulated as well. The trajectories and the costs are shown in \Cref{fig_PE} and \Cref{Tab_1}, from which we can see that the proposed strategies are optimal in the sense of Nash equilibrium, and the pure-evade strategy for $E$ is actually equivalent to $\bm{v}^*$. Besides, when both players use \eqref{eq_optimal_u_PE}, the cost is consistent with \eqref{eq_tE}.

Second, consider that $T$ can be observed by players and $E$ should move to $T$ as possible as it can. The initial positions are $P=(6.97, 5.83, 9.15)^\top, E=(9.80, 0.33, 0)^\top, T=(0,0,0)^\top$, which is the case $\bm{x}\in W_P$. And the speed ratio is $\delta=2$. 
%The corresponding center and radius of Apollonius circle are $\bm{c} = (10.74, -1.50, 0)^\top, r = 6.70$ respectively. 
Here $P$ should maximize $J$ in \eqref{eq_J_TAD} while $E$ should minimize it.
Using \eqref{eq_u_TAD}\eqref{eq_v_TAD}, we can get the optimal strategies $(\bm{u}^*,\bm{v}^*)$. Another two strategies, pure-pursuit/pure-evasion and task-oriented are simulated to compare the performance. As shown in \Cref{fig_TAD123}, \Cref{fig_TAD345}, and \Cref{Tab_2}, the proposed optimal strategy has the best performance. Neither $P$ or $E$ can improve its performance (i.e., increase or decrease the final distance to the target respectively) by changing its strategy unilaterally. Besides, when $P$ uses \eqref{eq_u_TAD} and $E$ uses \eqref{eq_v_TAD}, the cost is consistent with \eqref{eq_Lf}.

\begin{table}[th]
	\centering
	\begin{tabular}{c c c}
		\hline
		P's strategy & E's strategy & Time cost \\
		\hline
		pure-pursuit & $\bm{v}^*$ & 11.05 \\
		$\bm{u}^*$ & pure-evade & 10.43 \\
		$\bm{u}^*$ & $\bm{v}^*$ & 10.43 \\
		\hline
	\end{tabular}
	\caption{Simulation results of PE game under different strategies. $P$ aims to minimize time cost while $E$ aims to maximize it.}
	\label{Tab_1}
\end{table}

\begin{table}[th]
	\centering
	\begin{tabular}{c c c}
		\hline
		P's strategy & E's strategy & Final distance to the target \\
		\hline
		$\bm{u}^*$ & task-oriented & 4.2341 \\
		$\bm{u}^*$ & pure-evade & 19.4309 \\
		$\bm{u}^*$ & $\bm{v}^*$ & \textbf{4.1481}\\
		pure-pursuit & $\bm{v}^*$ & 3.2272 \\
		task-oriented & $\bm{v}^*$ & 0.0046 \\
		\hline
	\end{tabular}
	\caption{Simulation results of TAD game under different strategies. $P$ aims to maximize the final distance to the target while $E$ aims to minimize it.}
	\label{Tab_2}
\end{table}

\begin{figure}[tpb]
	\centering
	\includegraphics[width=1\linewidth]{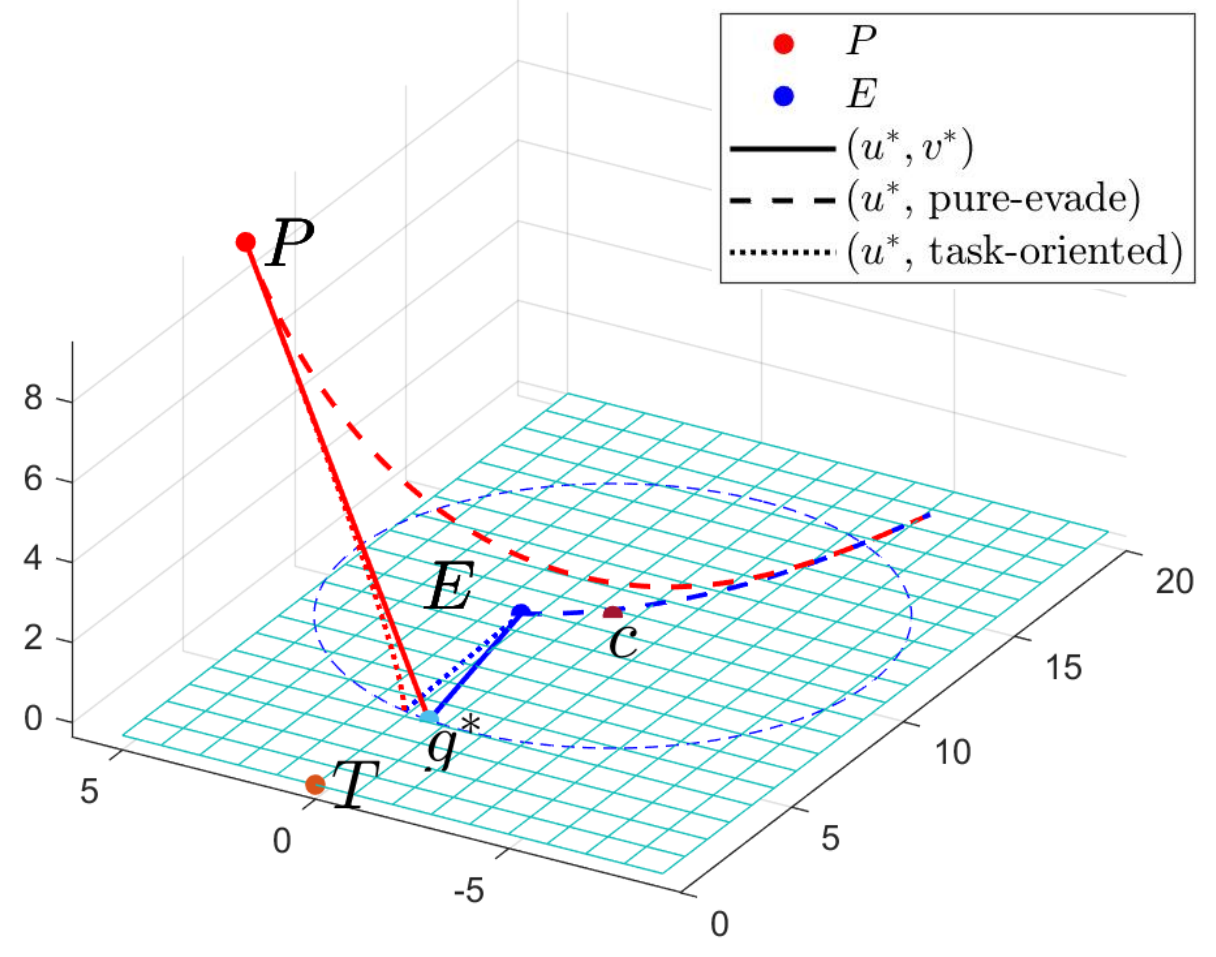}
	\caption{Results of TAD game under different strategies (Part 1). The initial positions of $P$ and $E$ are marked. Red lines and blue lines represent the trajectories of $P$ and $E$ respectively. Besides, the initial Apollonius circle are computed using the initial positions of $P$ and $E$, for which the center $c$, and the goal point $g^*$ are marked.  % Solid lines are the results of $(u^*,v^*)$, dash lines are the results of $(u^*,\text{pure-evade})$, and dotted lines are the results of $(u^*,\text{task-oriented})$.
	}
	\label{fig_TAD123}\vspace{-0.2cm}
\end{figure}

% \newpage
\section{CONCLUSIONS} \label{sec_conc}
In this paper, we investigate the pursuit-evasion problem between a faster pursuer moving in 3D space and a slower evader moving in a 2D plane. According to whether there is a static target, the game is modeled as PE game or TAD game, respectively. Both cases are considered and the corresponding optimal strategies are given using the idea of isochron. In order to demonstrate the optimality of the proposed strategies, we derive the value functions and further show that Hamilton-Jacobi-Isaacs equations are satisfied by the proposed strategies and value functions. Simulations are carried out, where some other strategies are compared to the proposed ones, and finally verify the optimality of the proposed strategies.
Future work will extend the results to the game of multiple players in 3D and 2D spaces. Besides, the problem containing an active target is also interesting.

\begin{figure}[tpb]
	\centering
	\includegraphics[width=1\linewidth]{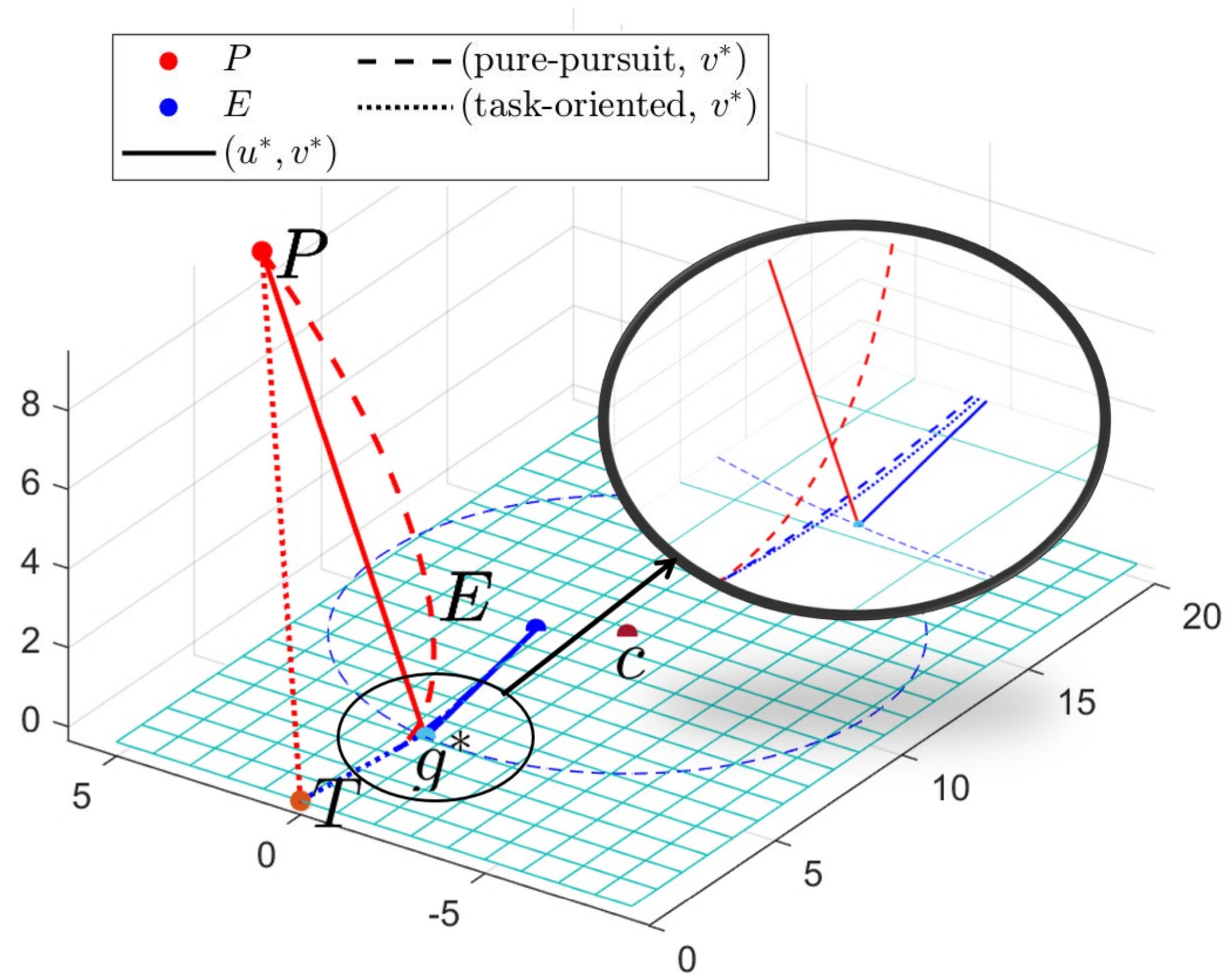}
	\caption{Results of TAD game under different strategies. The initial positions of $P$ and $E$ are marked. Red lines and blue lines represent the trajectories of $P$ and $E$ respectively. Besides, the initial Apollonius circle are computed using the initial positions of $P$ and $E$, for which the center $c$, and the goal point $g^*$ are marked.  % Solid lines are the results of $(u^*,v^*)$, dash lines are the results of $(\text{pure-pursuit},v^*)$, and dotted lines are the results of $(\text{task-oriented},v^*)$.
	}
	\label{fig_TAD345}
\end{figure}

\addtolength{\textheight}{-3cm}
%%%%%%%%%%%%%%%%%%%%%%%%%%%%%%%%%%%%%%%%%%%%%%%%%%%%%%%%%%%%%%%%%%%%%%%%%%%%%%%%
%\section{ACKNOWLEDGMENTS}
%
%The authors gratefully acknowledge the contribution of National Research Organization and reviewers' comments.
%
%
%%%%%%%%%%%%%%%%%%%%%%%%%%%%%%%%%%%%%%%%%%%%%%%%%%%%%%%%%%%%%%%%%%%%%%%%%%%%%%%%%
%
%References are important to the reader; therefore, each citation must be complete and correct. If at all possible, references should be commonly available publications.

%\begin{thebibliography}{99}
%
%\bibitem{c1}
%J.G.F. Francis, The QR Transformation I, {\it Comput. J.}, vol. 4, 1961, pp 265-271.
%
%\bibitem{c2}
%H. Kwakernaak and R. Sivan, {\it Modern Signals and Systems}, Prentice Hall, Englewood Cliffs, NJ; 1991.
%
%\bibitem{c3}
%D. Boley and R. Maier, "A Parallel QR Algorithm for the Non-Symmetric Eigenvalue Algorithm", {\it in Third SIAM Conference on Applied Linear Algebra}, Madison, WI, 1988, pp. A20.
%
%\end{thebibliography}

\bibliographystyle{IEEEtran}        % Include this if you use bibtex
\bibliography{shuai}

\end{document}